\documentclass[pdflatex,sn-mathphys-num]{sn-jnl}


\usepackage{graphicx}%
\usepackage{multirow}%
\usepackage{amsmath,amssymb,amsfonts}%
\usepackage{amsthm}%
\usepackage{mathrsfs}%
\usepackage[title]{appendix}%
\usepackage{xcolor}%
\usepackage{textcomp}%
\usepackage{manyfoot}%
\usepackage{booktabs}%
\usepackage{algorithm}%
\usepackage{algorithmicx}%
\usepackage{algpseudocode}%
\usepackage{listings}%



\theoremstyle{thmstyleone}%
\newtheorem{theorem}{Theorem}
\newtheorem{proposition}[theorem]{Proposition}%

\theoremstyle{thmstyletwo}%
\newtheorem{example}{Example}%

\theoremstyle{thmstylethree}%

\raggedbottom

\begin{document}

\title[Characterization of multi-way binary tables with uniform margins and fixed correlations]{Characterization of multi-way binary tables with uniform margins and fixed correlations}


\author[1]{\fnm{Roberto} \sur{Fontana}}\email{roberto.fontana@polito.it}

\author[2]{\fnm{Elisa} \sur{Perrone}}\email{e.perrone@tue.nl}

\author*[3]{\fnm{Fabio} \sur{Rapallo}}\email{fabio.rapallo@unige.it}

\affil[1]{\orgdiv{Department of Mathematical Sciences}, \orgname{Politecnico di Torino}, \orgaddress{\street{Corso Duca degli Abruzzi 24}, \city{Torino}, \postcode{10129}, \country{Italy}}}

\affil[2]{\orgdiv{Department of Mathematics and Computer Science}, \orgname{Eindhoven University of Technology}, \orgaddress{\street{Groene Loper 3}, \city{Eindhoven}, \postcode{5612 AE}, \country{the Netherlands}}}

\affil*[3]{\orgdiv{Department of Economics}, \orgname{Universit\`a di Genova}, \orgaddress{\street{via Vivaldi 5}, \city{Genova}, \postcode{16126}, \country{Italy}}}


\abstract{In many applications involving binary variables, only pairwise dependence measures, such as correlations, are available. However, for multi-way tables involving more than two variables, these quantities do not uniquely determine the joint distribution, but instead define a family of admissible distributions that share the same pairwise dependence while potentially differing in higher-order interactions. In this paper, we introduce a geometric framework to describe the entire feasible set of such joint distributions with uniform margins. We show that this admissible set forms a convex polytope, analyze its symmetry properties, and characterize its extreme rays. These extremal distributions provide fundamental insights into how higher-order dependence structures may vary while preserving the prescribed pairwise information. Unlike traditional methods for table generation, which return a single table, our framework makes it possible to explore and understand the full admissible space of dependence structures, enabling more flexible choices for modeling and simulation. We illustrate the usefulness of our theoretical results through examples and a real case study on rater agreement.}

\keywords{Contingency tables, Discrete copulas, Odds ratios, Rater agreement}


\maketitle

\section{Introduction}

Modeling the dependence structure among categorical and binary variables is a central problem in statistics and is crucial in the analysis of contingency tables, with important uses in simulation, inference, and model selection (see, for example, \cite{agresti:12,rudas2018lectures}). Investigating multivariate dependence is particularly valuable in applications such as rater agreement analysis, where measures of association become especially challenging to handle (see, e.g., \cite{voneyebook}).
When dealing with two categorical or binary variables, the situation is relatively well understood: Correlation is related to the odds ratio and uniquely determines the joint distribution when the marginals are fixed \cite{agresti:12}. However, the picture changes drastically in higher dimensions. When the number of variables increases to three or more, the correlations (or marginal odds ratios) no longer uniquely define the joint distribution \cite{agresti:12}, {\cite{FONTANAJMVA}. Instead, they identify a family of admissible distributions that all share the same pairwise dependence structure but differ in their higher-order interactions.

In practice, methods based on log-linear models \cite{Hammond12082024} or the Multivariate Iterative Proportional Fitting Procedure (MIPFP) \cite{barthelemy2018mipfp} are commonly used to generate probability tables with prescribed marginal distributions and pairwise correlations. However, these approaches intrinsically select a single distribution from the set of feasible joint distributions compatible with the specified pairwise correlation. 
This selection often occurs without explicitly acknowledging the multiplicity of admissible solutions and without offering insight into how higher-order dependencies are implicitly fixed by the chosen modeling strategy. As a result, these methods may impose unintended structural constraints. This limitation becomes particularly relevant in applications such as rater agreement analysis, where MIPFP often forces higher-order odds ratios to be one, or log-linear models impose a specific hierarchical dependence structure, potentially leading to unrealistic or overly restrictive representations of joint agreement.

The goal of this paper is to characterize the entire feasible set of joint distributions with fixed pairwise dependence (expressed through correlations or marginal odds ratios) and uniform margins. 
This work complements recent studies on odds ratios and zero-pattern compatibility in contingency tables \cite{fontana2025IES}, by shifting the focus to the geometric structure induced by fixed pairwise correlations.
By focusing on the full admissible set rather than a single model-generated table, we bridge dependence modeling and log-linear representations, offering a framework that enables a more informed and conscious selection of a probability table within the admissible family.
Our approach adopts a geometric perspective: We study the feasible region of admissible distributions as a convex polytope, whose vertices and structure encode all possible joint distributions consistent with the given constraints. 
The restriction to the class of distributions with uniform margins is inspired by recent developments in copula theory for discrete random vectors \cite{Perrone_2019, geenens2020copula,Perrone2021,Perrone2022, Perrone2024,kojadinovic2024}, whose central idea is to depict the dependence disregarding the marginal behavior. The restriction to tables with uniform margins has clear benefits for our approach: By focusing on uniform margins, we obtain a clean geometric representation of the dependence structure and uncover elegant symmetries in the space of joint distributions, which are instrumental in describing and characterizing the entire feasible set.
This geometric characterization is particularly relevant for applications where pairwise dependence information is available or desired, but higher-order interactions are uncertain or flexible. 
Examples include rater agreement analysis, where the true nature of joint agreement beyond the pairwise level may vary, as well as simulation and missing data imputation tasks, where exploring multiple feasible probability tables is both meaningful and necessary. 
As a matter of fact, in the analysis of rater agreement it is well known that marginal non-homogeneity can make difficult the interpretation of agreement measures \cite{feinstein|cicchetti:90}, and thus a solution already explored in the literature \cite{agresti|etal:95} is to transform the original table into an equivalent one with homogeneous, i.e., uniform, margins. 
By characterizing the entire admissible class of binary tables with uniform margins, our framework offers a more informative and practically useful view of the dependence structure among binary variables. 
While in this work we focus on binary tables, we recognize extending the theory to multi-way non-binary tables as an important avenue for future research. 

The remainder of the paper is structured as follows. Section~\ref{sec:background} introduces the notation, problem formulation, connections with log-linear models, and existing approaches to generate tables with prescribed correlation. Section~\ref{sec:framework} presents the geometric characterization of the feasible set, including its polyhedral structure. Section~\ref{sec:examples} illustrates the method through examples and shows its applicability in practice for rater agreement. Finally, Section~\ref{sec:discussion} concludes the paper with a summary and possible directions for future work.

\section{Background and related work}\label{sec:background}

In this section, we first introduce the notation and recall key concepts used to describe dependence among binary variables. In particular, we review the notions of pairwise correlations and odds ratios, discuss their relationships, and highlight their 
roles in determining the dependence structure of a joint distribution.

We consider $d \geq 2$ binary random variables $X_1,\ldots,X_d$, with joint cell probabilities
\[
p_\alpha = \mathbb{P}(X_1=\alpha_1,\ldots,X_d=\alpha_d), \qquad \alpha=(\alpha_1,\ldots,\alpha_d)\in\{0,1\}^d.
\]
When needed, we assume the set $\{0,1\}^d$ in lexicographic order, and then we can refer to the $k$-th cell of $T$, $k=1,\ldots, 2^d$. We observe that the $k$-th cell of a table is the one corresponding to $\alpha_k=((\alpha_{k})_1, \ldots, (\alpha_{k})_d)$ such that $\sum_{j=1}^{d} (\alpha_{k})_j 2^{d-j}=k-1, \; k=1,\ldots,2^d$, i.e., $\alpha_k$ is the binary representation of $k-1$. Thus, a table $T$ is described by the $2^d$-vector $p=(p_\alpha, \alpha \in \{0,1\}^d)$. To simplify the notation, we denote $\{0,1\}^d$ by $\mathcal{D}$, $E[X_i]$ by $\mu_i$, and $E[X_i X_j]$ by $\mu_{ij}$. Given $\alpha=(\alpha_1,\ldots,\alpha_d)$ we denote by $1-\alpha$ the vector $(1-\alpha_1,\ldots,1-\alpha_d)$.

There are $d$ univariate margins $m_i, i=1,\ldots,d$ defined as
\[
m_i=(m_{i}^{0},m_{i}^{1})=\left(\sum_{\alpha \in \mathcal{D}, \alpha_i=0} p_\alpha, \sum_{\alpha \in \mathcal{D}, \alpha_i=1} p_\alpha \right).
\]
We observe that $m_{i}^{1}=\mu_i$ and $m_{i}^{0}=1-\mu_i$, $i=1,\ldots,d$.

We also consider the $\binom{d}{2}$ bivariate margins
\[
m_{ij}=
\begin{pmatrix}
m_{ij}^{00} & m_{ij}^{01}\\
m_{ij}^{10} & m_{ij}^{11}
\end{pmatrix}\, ,
\]
where 
\[
m_{ij}^{k_1 k_2}=\sum_{\alpha \in \mathcal{D}, \alpha_i=k_1, \alpha_j=k_2, } p_\alpha,
\]
$k_1,k_2 \in \{0,1\}$, and ${i,j=1,\ldots,d, i<j}$.

We observe that $m_{ij}^{11}=\mu_{ij}$, and if we fix the univariate margins $m_i$ and $m_j$ to $\overline{m}_i=(\overline{m}_i^0,\overline{m}_i^1)$ and $\overline{m}_j=(\overline{m}_j^0,\overline{m}_j^1)$, respectively, we can express $m_{ij}$ as a function of $m_{ij}^{11}$:
\begin{equation} \label{eq:2mc}
m_{ij}=
\begin{pmatrix}
\overline{m}_{i}^{0}-\overline{m}_{j}^{1}+ m_{ij}^{11} & \quad \overline{m}_{j}^{1}- m_{ij}^{11}\\
\overline{m}_{j}^{1}-m_{ij}^{11} & m_{ij}^{11}
\end{pmatrix}	
\end{equation}

For any $i\neq j \in \{1,\ldots,d\}$, the pairwise correlation between $X_i$ and $X_j$ depends only on the marginal probabilities and the marginal $2\times2$ sub-table obtained by summing over all other variables. It can be expressed as
\begin{flalign}
\label{eq:pair_corr_general}
\mathrm{Cor}(X_i, X_j)
&=  \dfrac{\mathbb{P}(X_i=1,X_j=1) - \mathbb{P}(X_i=1)\mathbb{P}(X_j=1)}
{\sqrt{\mathbb{P}(X_i=1)(1-\mathbb{P}(X_i=1))\,\mathbb{P}(X_j=1)(1-\mathbb{P}(X_j=1))}} \nonumber\\[4pt]
&= 
\dfrac{{m}_{ij}^{11} - m_{i}^{1}m_{j}^{1}}
{\sqrt{m_{i}^{1} m_{i}^{0} m_{j}^{1} m_{j}^{0}  }}  .
\end{flalign}

\medskip
For any pair $(X_i,X_j)$, we define conditional odds ratios by fixing the values of the remaining $(d-2)$ variables. Let $\alpha'\in\{0,1\}^{d-2}$ denote the configuration of the other variables. Then, the conditional odds ratio of $(X_i,X_j)$ given $X_{-(i,j)}=\alpha'$ is
\begin{equation} \label{oreq_general}
\omega_{\alpha'}^{ij}
=
\frac{p_{\alpha_1}\,p_{\alpha_2}}{p_{\alpha_3}\,p_{\alpha_4}},
\end{equation}
where the four index vectors $\alpha_1,\ldots,\alpha_4\in\{0,1\}^d$ share the same entries as the $\alpha'$ outside positions $i$ and $j$, while their $(i,j)$ entries are, respectively,
\[
\alpha_1(i,j)=(1,1),\quad
\alpha_2(i,j)=(0,0),\quad
\alpha_3(i,j)=(1,0),\quad
\alpha_4(i,j)=(0,1).
\]
For example, in $d=4$, fixing $(X_2,X_4)=(0,1)$ yields:
\[
\omega_{01}^{13}=
\frac{p_{0001}\,p_{1011}}{p_{0011}\,p_{1001}}.
\]

In a $d$-way binary table, there are $\binom{d}{2}2^{d-2}$ such conditional odds ratios, though these are not all independent.

The marginal odds ratio for a pair $(X_i,X_j)$ is obtained by summing over all other variables:
\begin{equation} \label{eq:oddmarg}
 \omega_M^{ij}
=
\frac
{\sum_{\alpha:\alpha_i=\alpha_j=1}p_{\alpha} \;\sum_{\alpha:\alpha_i=\alpha_j=0}p_{\alpha}}
{\sum_{\alpha:\alpha_i=1,\alpha_j=0}p_{\alpha} \;\sum_{\alpha:\alpha_i=0,\alpha_j=1}p_{\alpha}} =
\frac
{m_{ij}^{11} m_{ij}^{00}} 
{m_{ij}^{10} m_{ij}^{01}} \, .   
\end{equation}
    

\medskip

In this setting, the dependence structure among the variables can be summarized in different ways, most commonly through pairwise correlations or through marginal and conditional odds ratios. 
These notions coincide in the bivariate case, where, given the margins, either of them uniquely determines the joint distribution, but diverge in higher dimensions, where they only constrain a family of admissible distributions. 
In the following sections, we examine separately the bivariate and multivariate cases to highlight these differences, and we discuss how correlations and odds ratios determine (or fail to determine) the underlying joint distribution.
Along the way, we highlight how different odds ratio structures relate to the full joint distribution through their copula representation and connect naturally to the main problem considered in this study \cite{fontana2026ijar, geenens2020copula, kojadinovic2024}.
This perspective will serve as a conceptual foundation for our later geometric analysis of the feasible set of distributions with fixed pairwise dependence and uniform margins.

\subsection{Two-way binary tables}

We begin with the simple case of binary $2\times2$ tables. In this case, the dependence between $X_1$ and $X_2$ can be described either through the correlation or by the odds ratio, and these two measures are closely related.

The Pearson correlation between $X_1$ and $X_2$ can be written as
\begin{flalign}
\label{eq:pair_corr_2}
\mathrm{Cor}(X_1, X_2)
&=  \dfrac{\mathbb{P}(X_1=1,X_2=1) - \mathbb{P}(X_1=1)\mathbb{P}(X_2=1)}
{\sqrt{\mathbb{P}(X_1=1)(1-\mathbb{P}(X_1=1))\mathbb{P}(X_2=1)(1-\mathbb{P}(X_2=1))}} \nonumber \\
&=
\dfrac{p_{11} - m_{1}^1 m_{2}^1}
{\sqrt{m_{1}^1(1-m_{1}^1) \, m_{2}^1(1-m_{2}^1)}} \nonumber\\[3pt]
&=
\dfrac{p_{11}p_{00} - p_{10}p_{01}}
{\sqrt{m_{1}^1m_{1}^0  m_{2}^1 m_{2}^0}}.
\end{flalign} 

The concept of odds ratio is also a viable way to characterize the dependence for $2 \times 2$ tables. Specifically, in a $2\times2$ table there is a single odds ratio, defined as
\[
\omega^{12} = \frac{p_{11}p_{00}}{p_{10}p_{01}}.
\]
The numerator of the correlation, i.e. the covariance $\mathrm{Cov}(X_1,X_2)$, coincides with $p_{11}p_{00} - p_{10}p_{01}$ and is directly related to the odds ratio, in the sense that $\omega^{12} = 1$ if and only if $X_1$ and $X_2$ are independent, which in this case corresponds to $\mathrm{Cor}(X_1,X_2)=0$. More generally, $\omega^{12}>1$ implies a positive association and $\omega^{12}<1$ a negative one \cite{agresti:12, rudas2018lectures}.

\medskip

\noindent \textbf{Dependence through odds ratios under uniform margins.}
When the margins of the table are transformed to be uniform, the resulting table is uniquely determined by the odds ratio (or equivalently by the correlation) in line with recent results in the copula theory \cite{geenens2020copula}. Indeed, in the bivariate binary case there exists a unique probability distribution ${p}$ with uniform margins, i.e.,
\[
\overline{m}_{1}^1 = \overline{m}_{2}^1 = \frac{1}{2},
\]
such that ${p}$ preserves the original odds ratio $\omega^{12}$. This means that the odds ratio (and equivalently the correlation) fully characterizes the dependence structure once the margins are fixed to be uniform. In other words, in dimension $d=2$, the feasible set of distributions with fixed pairwise dependence and uniform margins reduces to a single point.
Such a single point is also constrained to be symmetric along the diagonals, i.e., with ${p}_{00}={p}_{11}$ e ${p}_{01}={p}_{11} = \frac{1}{2} - {p}_{00}= \frac{1}{2} -{p}_{11} $. Additionally, in this a case, by a simple substitution in Eq~\eqref{eq:pair_corr_2}, we can re-write $\mathrm{Cor}(X_1,X_2)$ as a linear function of $p_{11}$, namely, $\mathrm{Cor}(X_1,X_2)=4 p_{11} - 1$.

For $d \geq 3$, the solution set is nontrivial but still preserves the symmetry property observed in the bivariate case. This highlights the main conceptual distinction between the bivariate and multivariate frameworks and, in higher dimensions, justifies our emphasis on describing the full admissible set rather than focusing on a single representative distribution.

\subsection{Multi-way binary tables}
When $d \ge 3$, the relationship between correlations and odds ratios becomes more intricate, and fixing all pairwise correlations (or equivalently all marginal odds ratios) is no longer sufficient to uniquely determine the joint distribution. 

To illustrate the situation, let us consider $d=3$, i.e., a random vector $(X_1,X_2,X_3)$. Using Eq.~\eqref{eq:pair_corr_general} the pairwise correlation between $X_1$ and $X_2$ is given by
\begin{flalign*}
\label{eq:pair_corr_3way}
\mathrm{Cor}(X_1, X_2)
&= \dfrac{m_{12}^{11}m_{12}^{00} - m_{12}^{10} m_{12}^{01}}
{\sqrt{m_1^1 m_1^0 m_2^1 m_2^0}}.
\end{flalign*}

Analogous formulas hold for $\mathrm{Cor}(X_1,X_3)$ and $\mathrm{Cor}(X_2,X_3)$.

\medskip

\noindent \textbf{Conditional and marginal odds ratios.}
In the case $d=3$, there are two fundamentally different types of odds ratios.

\begin{itemize}
\item[\emph{(i)}] \emph{Conditional odds ratios.} For each pair $(X_i,X_j)$, we may condition on a value of the remaining variable. According to Eq.~\eqref{oreq_general} this yields six conditional odds ratios:
\begin{equation} \label{orsys}
\begin{split}
\omega_{0}^{23}= \frac {p_{000}p_{011}} {{p_{001}p_{010}}}, \qquad 
\omega_{1}^{23}= \frac {p_{100}p_{111}} {{p_{101}p_{110}}}, \\[4pt]
\omega_{0}^{13}= \frac {p_{000}p_{101}} {{p_{001}p_{100}}}, \qquad 
\omega_{1}^{13}= \frac {p_{010}p_{111}} {{p_{011}p_{110}}}, \\[4pt]
\omega_{0}^{12}= \frac {p_{000}p_{110}} {{p_{010}p_{100}}}, \qquad 
\omega_{1}^{12}= \frac {p_{001}p_{111}} {{p_{011}p_{101}}}.
\end{split}
\end{equation}
Only four of these six constraints are independent.

\item[\emph{(ii)}] \emph{Marginal odds ratios.} We may instead consider the odds ratios of the $2\times2$ tables obtained by summing over (collapsing) the remaining variable according to Eq.~\eqref{eq:oddmarg}:
\begin{equation} \label{ormarg}
\omega_M^{12}= 
\frac
{m_{12}^{11} m_{12}^{00}} 
{m_{12}^{10} m_{12}^{01}}, \qquad
\omega_M^{23}=\frac
{m_{23}^{11} m_{23}^{00}} 
{m_{23}^{10} m_{23}^{01}}, \qquad 
\omega_M^{13}=\frac
{m_{13}^{11} m_{13}^{00}} 
{m_{13}^{10} m_{13}^{01}}.
\end{equation}
\end{itemize}

The conditional and marginal odds ratios do not coincide in general. 
All conditional odds ratios between a pair $(X_i,X_j)$ are possible to be equal to 1 (which suggests conditional independence), while the marginal odds ratio $\omega_M^{ij}$ indicates strong association. This classical phenomenon is known as Simpson’s paradox and emphasizes that marginal odds ratios describe associations in the collapsed table, while conditional odds ratios reflect dependence conditioned on other variables \cite{agresti:12,rudas2018lectures}.

\medskip

\noindent \textbf{Correlations, marginal odds ratios, and dependence.}
The correlation between two binary random variables $X_i$  and $X_j$ is directly related to the corresponding {marginal} odds ratio $\omega_M^{ij}$, but not to the conditional odds ratios. Hence, specifying all pairwise correlations (or equivalently all marginal odds ratios) does not determine the conditional odds ratios or any higher-order interaction terms even when the margins are uniform. 
%
This point is crucial: while conditional odds ratios and uniform margins uniquely determine the full joint distribution in arbitrary dimension (see, for example, \cite{geenens2020copula, kojadinovic2024,fontana2026ijar} and references therein), marginal odds ratios (i.e., pairwise correlations) do not. Hence, fixing all pairwise correlations and the margins determines not a unique joint distribution, but a whole family of distributions. The structural analysis and geometric characterization of this feasible family are the focus of Section~\ref{sec:framework}. In the reminder of this section we provide a brief overview of the main algorithms for sampling from correlated binary distributions, to emphasize the relevance of our geometric description.

\subsection{Algorithms for sampling from correlated binary random variables}

Generating multivariate binary random variables with a given correlation structure is a crucial task in simulation studies, and a rich variety of methods have been described in the literature to tackle this problem. Random generation plays a major role in, e.g., inference and model validation. Here we review the main different approaches that show close connections with our theory. 

The paper \cite{lee:93} proposes a linear-programming method for constructing the joint distribution by imposing linear constraints on the cell probabilities or minimizing certain linear functions. This approach can be readily implemented with standard Linear Programming solvers. However, as the table’s dimension grows, the large number of required linear constraints can introduce considerable arbitrariness into the resulting distribution.

In \cite{barthelemy2018mipfp}, the authors present a general procedure for transforming one multi-way table into another that has prescribed margins while preserving the original odds ratio structure. Because its two-dimensional counterpart, known as Iterative Proportional Fitting, is a classical method in statistics and numerical analysis \cite{fienberg:70}, this multidimensional extension is referred to as MIPFP. As a byproduct of the MIPFP, the authors of \cite{barthelemy2018mipfp} present a procedure to construct multivariate Bernoulli distributions with prescribed pairwise marginal odds ratios
\[
\omega_M^{ij}= \frac {\left(\sum_{\alpha_i=1,\alpha_j=1}p_\alpha\right)\left(\sum_{\alpha_i=0,\alpha_j=0}p_\alpha\right) } {\left(\sum_{\alpha_i=0,\alpha_j=1}p_\alpha\right)\left(\sum_{\alpha_i=1,\alpha_j=0}p_\alpha\right) }=
\frac
{m_{ij}^{11} m_{ij}^{00}} 
{m_{ij}^{10} m_{ij}^{01}} \, ,
\]
and simultaneously set all higher-order odds ratios in the contingency table equal to zero. This procedure is implemented in \texttt{R} using the \texttt{ObtainMultBinaryDist} function from the \texttt{mipfp} package. The approach is computationally efficient, and the desired output can be produced through a straightforward application of MIPFP. The algorithm is applicable to tables of any size, however, it is particularly tailored to handle binary tables.

Finally, in \cite{Hammond12082024}, the authors make use of the log-linear form of the joint distribution for table generation. Correlations are expressed using odds ratios, and by applying the corner parametrization along with suitable algebraic refinement steps, a valid joint distribution is obtained.For example, for a three-way binary table and under the assumption that there are no three-way interaction terms, the model is given by
\[
\log p_{ijk} = \lambda + \lambda_i^1 +  \lambda_j^2 + \lambda_k^3 + \lambda_{ij}^{12} + \lambda_{ik}^{12} + \lambda_{jk}^{13}.
\]
When this model is reparameterized in terms of odds ratios, it can be written as
\[
\log p_{ijk} = \lambda + \lambda_i^1 +  \lambda_j^1 + \lambda_k^3 + \omega_M^{12} + \omega_M^{13} + \omega_M^{23}.
\] 
The key benefit of this approach is that it can be generalized to multilevel tables with an arbitrary number of levels. Nevertheless, as the dimensionality increases, the links between correlations and odds ratios can require intricate equations, which complicates selecting parameters based on the marginal odds ratios. In the introductory section of the same paper \cite{Hammond12082024}, there is a brief review of other methods based on the theory of log-linear models.  

All of the approaches discussed above rely on choosing a single joint distribution, usually by imposing extra assumptions beyond the given pairwise dependencies. As emphasized in the introduction, our aim is instead to describe the complete collection of feasible distributions, thus offering a systematic perspective on all attainable dependence structures, which will be the subject of the next section.

\section{Geometric characterization of multi-way binary tables with uniform margins and fixed correlations}\label{sec:framework}

We now formalize the problem at the core of our theoretical developments.

\noindent\textit{\textbf{Problem statement:}} Given $d$ binary random variables with a prescribed pairwise dependence structure (expressed through correlations or marginal odds ratios), determine the full set of joint distributions with uniform margins that satisfy these constraints. In particular, describe the geometric structure of this admissible set, understand how higher-order interactions may vary within it, and identify its extremal elements.

As discussed in Section~\ref{sec:background}, restricting attention to tables with uniform margins is both mathematically and statistically advantageous. From a mathematical perspective, uniform margins induce symmetries in the table, which facilitate a clearer geometric characterization of the feasible set. From a statistical viewpoint, working with uniform margins eliminates the effect of unbalanced marginal distributions, allowing for a more direct control on the dependence structure.
To incorporate uniform margins into our framework, we now formulate the condition of uniform margins in algebraic form.
Specifically, to obtain uniform margins, the probability mass function (pmf) $p$ of a table $T$ must satisfy the following constraints
\[
m_{i}^0=m_{i}^1=\frac{1}{2} \Leftrightarrow m_{i}^0-m_{i}^1=0 \Leftrightarrow \sum_{\alpha \in \mathcal{D}, \alpha_i=0} p_\alpha- \sum_{\alpha \in \mathcal{D}, \alpha_i=1} p_\alpha=0,  \; i=1,\ldots,d \, ,
\]
that can be rewritten as
\begin{equation} \label{eq:system}
 (1_{\{\alpha \in \mathcal{D}, \alpha_i=0\}}-1_{\{\alpha \in \mathcal{D}, \alpha_i=1\}})^Tp=0, \;\;i=1,\ldots,d \, , 
\end{equation}
where $1_{A}$ is the indicator vector of $A$, $b^T$ denotes the transpose of $b$, and $p$ is the vector $p=(p_\alpha, \alpha \in \mathcal{D})$. 

If, in addition to uniform margins, we want to specify the bivariate margins, we observe that Eq.~\eqref{eq:2mc} becomes
\begin{equation} \label{eq:2mcuni}
m_{ij}=
\begin{pmatrix}
m_{ij}^{11} & \frac{1}{2}- m_{ij}^{11}\\
\frac{1}{2}-m_{ij}^{11} & m_{ij}^{11}
\end{pmatrix} \, .
\end{equation}
Then, to specify uniform margins it will be enough to specify $m_{ij}^{11}=\overline{m}_{ij}^{11}$ or, equivalently, $\mu_{ij}=\overline{\mu}_{ij}$, ${i,j=1,\ldots,d, i<j}$.
We obtain
\begin{equation*}
\sum_{\alpha \in \mathcal{D}, \alpha_i=\alpha_j=1 } p_\alpha =\overline{\mu}_{ij},
\end{equation*}
that is
\begin{equation} \label{eq:system2tmp}
\sum_{\alpha \in \mathcal{D}, \alpha_i\alpha_j=1} (\overline{\mu}_{ij}-1) p_\alpha +\overline{\mu}_{ij}\sum_{\alpha \in \mathcal{D}, \alpha_i \alpha_j=0} p_\alpha =0.
\end{equation}
We can write  Eq.~\eqref{eq:system2tmp} as
\begin{equation} \label{eq:system2}
 ((\overline{\mu}_{ij}-1)1_{\{\alpha \in \mathcal{D}, \alpha_i \alpha_j=1\}}+\overline{\mu}_{ij}1_{\{\alpha \in \mathcal{D}, \alpha_i \alpha_j=0\}})^Tp=0, \;\;i,j=1,\ldots,d, \;\; i<j. 
\end{equation}

Eq.~\eqref{eq:system} and Eq.~\eqref{eq:system2} can be put in compact form by denoting as $H_d$ its matrix of coefficients. Namely, Eq.~\eqref{eq:system} and Eq.~\eqref{eq:system2} become:

\begin{equation} \label{eq:HD}
    H_d \,p=0 \, .
\end{equation}

The polyhedral cone defined by $H_d$ is therefore
\begin{equation} \label{eq:polycone}
\mathcal{C}(H_d) = \left\{ p=(p_\alpha), p_\alpha \in \mathbb{R}, p_\alpha\geq 0, \; \alpha \in \mathcal{D},  H_d p=0,  \right\} \, . 
\end{equation}

For example, for $d=3$ we obtain:
\[
H_3 p=
\begin{pmatrix}
1 & 1 & 1 & 1 & -1 & -1 & -1 & -1  \\
1 & 1 & -1 & -1 & 1 & 1 & -1 & -1 \\
1 & -1 & 1 & -1 & 1 & -1 & 1 & -1 \\
\overline{\mu}_{12} & \overline{\mu}_{12} & \overline{\mu}_{12} & \overline{\mu}_{12} & \overline{\mu}_{12} & \overline{\mu}_{12} & (\overline{\mu}_{12}-1) & (\overline{\mu}_{12}-1) \\
\overline{\mu}_{13} & \overline{\mu}_{13} & \overline{\mu}_{13} & \overline{\mu}_{13} & \overline{\mu}_{13} & (\overline{\mu}_{13}-1) & \overline{\mu}_{13} & (\overline{\mu}_{13}-1)  \\
\overline{\mu}_{23} & \overline{\mu}_{23} & \overline{\mu}_{23} & (\overline{\mu}_{23}-1) & \overline{\mu}_{23} &  \overline{\mu}_{23} & \overline{\mu}_{23} & (\overline{\mu}_{23}-1)  \\
\end{pmatrix}p=0 \, .
\]

The solutions of the homogeneous system $H_d p = 0$ with $p_\alpha \ge 0$ form the polyhedral cone $\mathcal{C}(H_d)$ defined in Eq.~\eqref{eq:polycone}. 
An \emph{extreme ray} of this cone is a nonzero vector $\tilde{y} \in \mathcal{C}(H_d)$ that cannot be expressed as a nontrivial linear combination of other elements of the cone. 
Thus, extreme rays represent the fundamental generating directions of the cone: Every element $p \in \mathcal{C}(H_d)$ can be written as a nonnegative linear combination of its extreme rays. Identifying these generating rays is therefore essential for understanding the structure of the feasible set.
The extreme rays $\tilde{y}_i=(\tilde{y}_{i\alpha}, i=1,\ldots,n_d, \alpha \in \mathcal{D}, \tilde{y}_\alpha \geq 0)$ of the system $H_d y = 0$, where $y=(y_\alpha, \alpha \in \mathcal{D}, y_\alpha \geq 0)$, coincide with the extreme rays of the polyhedral cone $\mathcal{C}(H_d)$ introduced in Eq.~\eqref{eq:polycone} (see \cite{dahl97,barvinok:02}). They can be computed using software for commutative algebra, such as \texttt{4ti2} \cite{4ti2}, which is used in this work. Then the corresponding extreme pmfs $r_i=(r_{i\alpha}, i=1,\ldots,n_d, \alpha \in \mathcal{D})$ are computed by simple normalization $r_{i\alpha}=\frac{\tilde{y}_{i\alpha}}{\sum_{\alpha \in \mathcal{D}} \tilde{y}_{i\alpha}}, i=1,\ldots,n_d, \alpha \in \mathcal{D}$. The intersection of the polyhedral cone $\mathcal{C}(H_d)$ with the simplex
\[
\Delta=\left\{ p_\alpha \in \mathbb{R}^{(2^d)} \ : \ p_\alpha \ge 0, \ \sum_\alpha p_\alpha=1 \right\}
\]
is the convex polytope $\mathcal{F}({H_d})$ whose extreme points are the extreme pmfs $r_i, i=1,\ldots,n_d$: 
\begin{equation} \label{eq:feasibleset}
\mathcal{F}({H_d})=\left\{ p=(p_\alpha), p_\alpha \in \mathbb{R}, p_\alpha\geq 0, \; \alpha \in \mathcal{D},  H_d p=0,  \sum_{\alpha \in \mathcal{D}} p_\alpha =1\right\} \, .
\end{equation}

It is worth noting that, given the extreme pmfs $r_i, i=1,\ldots,n_d$ where both $r_i, i=1,\ldots,n_d$ and $n_d$ depend on the dimension $d$ and the specified univariate and bivariate margins of the multi-way table, any pmf $p$ that satisfies Eq.~\eqref{eq:HD} can be written as 
\[
p=\sum_{i=1}^{n_d} \theta_i r_i
\]
with $\theta_i\geq0, i=1,\ldots,n_d$ and $\sum_{i=1}^{n_d} \theta_i=1$.
With the three propositions below, we study the geometric structure of the feasible set of probability distributions in Eq.~\eqref{eq:feasibleset}. 

\begin{proposition} \label{prop:sym}
Let $(X_1,\ldots,X_d)$ be a multivariate Bernoulli random variable with distribution $p=(p_\alpha, \alpha \in \mathcal{D})$, uniform margins and given second-order moments $\mu_{12},\mu_{13},\ldots,\mu_{d-1,d}$. Then, the multivariate Bernoulli random variable $(X'_1,\ldots,X'_d)$ with distribution $p'=(p'_{\alpha}=p_{1-\alpha}, \alpha \in \mathcal{D})$ also has uniform margins and the same second-order moments. 
\end{proposition}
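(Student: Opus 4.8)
The plan is to identify the coordinate flip $\alpha \mapsto 1-\alpha$ as a transformation acting on the random variables themselves rather than on the cells. First I would observe that the pmf $p'=(p'_\alpha = p_{1-\alpha})$ is precisely the law of the componentwise complement $(X'_1,\ldots,X'_d) := (1-X_1,\ldots,1-X_d)$, since
\[
\mathbb{P}(X'_1=\alpha_1,\ldots,X'_d=\alpha_d)=\mathbb{P}(X_1=1-\alpha_1,\ldots,X_d=1-\alpha_d)=p_{1-\alpha}=p'_\alpha .
\]
This reduces the claim to showing that complementing every coordinate preserves uniform margins and all pairwise second-order moments.

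For the margins I would compute $E[X'_i]=E[1-X_i]=1-\mu_i$, which equals $1/2$ precisely because $p$ has uniform margins; hence $p'$ has uniform margins as well. For the second-order moments I would expand
\[
E[X'_i X'_j]=E\big[(1-X_i)(1-X_j)\big]=1-\mu_i-\mu_j+\mu_{ij},
\]
and then use $\mu_i=\mu_j=\tfrac12$, so that the first three terms cancel and the expression reduces to $\mu_{ij}$. This shows $(X'_1,\ldots,X'_d)$ has exactly the prescribed moments, and the proof is complete.

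An alternative, purely algebraic route stays within the cone formalism of Eq.~\eqref{eq:HD}: the flip is the coordinate permutation of $\mathbb{R}^{2^d}$ induced by $\alpha\leftrightarrow 1-\alpha$, and one checks that it negates each of the $d$ univariate rows of $H_d$ and sends the bivariate row $g_{ij}=\overline{\mu}_{ij}\mathbf{1}-1_{\{\alpha_i=\alpha_j=1\}}$ to $\overline{\mu}_{ij}\mathbf{1}-1_{\{\alpha_i=\alpha_j=0\}}$; for a distribution with uniform margins and $m_{ij}^{11}=\overline{\mu}_{ij}$, Eq.~\eqref{eq:2mcuni} forces $m_{ij}^{00}=\overline{\mu}_{ij}$, so this transformed row also annihilates $p$. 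Either way, the map leaves $\mathcal{C}(H_d)$, and after normalization $\mathcal{F}(H_d)$, invariant.

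There is no serious obstacle here; the only points that require care are the bookkeeping — confirming that $p'_\alpha=p_{1-\alpha}$ corresponds to flipping all $d$ coordinates simultaneously, not to some other symmetry of the table — and noticing that the cancellation $1-\mu_i-\mu_j+\mu_{ij}=\mu_{ij}$ relies essentially on the uniformity of the margins and would fail otherwise. The real content of the statement, to be used in the subsequent results, is that $p\mapsto p'$ is an involutive symmetry of the feasible polytope $\mathcal{F}(H_d)$ and therefore permutes its extreme points.
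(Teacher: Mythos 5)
Your proof is correct and follows essentially the same route as the paper: both arguments reduce to the identity $E[(1-X_i)(1-X_j)]=1-\mu_i-\mu_j+\mu_{ij}$ together with the cancellation forced by uniform margins, the only cosmetic difference being that you first identify $p'$ as the law of the componentwise complement $(1-X_1,\ldots,1-X_d)$ whereas the paper performs the index substitution $\alpha\mapsto 1-\alpha$ directly inside the defining sums. The additional remarks on the cone formalism are consistent with the paper but not needed for this proposition.
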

\begin{proof}
Let $(X'_1,\ldots,X'_d)$ a multivariate Bernoulli random variable with distribution $p'$. We get
\[
\mu'_i=E[X'_i]=\sum_{\alpha \in \mathcal{D}, \alpha_i=1} p'_\alpha=\sum_{\alpha \in \mathcal{D}, \alpha_i=1} p_{1-\alpha}=\sum_{\alpha \in \mathcal{D}, \alpha_i=0}p_\alpha=1-E[X_i]=\frac{1}{2}, \; 1,\ldots,d.
\]
It follows that $(X'_1,\ldots,X'_d)$ has uniform margins. Similarly for the second-order moments we obtain
\begin{eqnarray*}
  \mu'_{ij}=E[X'_i X'_j]=\sum_{\alpha \in \mathcal{D}, \alpha_i=\alpha_j=1} p'_\alpha=\sum_{\alpha \in \mathcal{D}, \alpha_i=\alpha_j=1}p_{1-\alpha}=\sum_{\alpha \in \mathcal{D}, \alpha_i=\alpha_j=0} p_\alpha= \\
=E[(1-X_i)(1-X_j)]=1-E[X_i]-E[X_j]+E[X_iX_j]=E[X_iX_j]=\mu_{ij}, \\
i,j=1,\ldots,d, i<j.  
\end{eqnarray*}
\end{proof}

In the following result, we prove that the symmetry of the feasible solutions can be extended to the extreme pmfs, and therefore the extreme pmfs come in pairs with a precise combinatorial structure. The proof is based on a geometric definition of extreme ray, see \cite[p.~65]{barvinok:02}

\begin{proposition} \label{prop:symrays}
If $r=(r_\alpha)$ is an extreme pmf of the polyhedral cone for the problem in Eq.~\eqref{eq:polycone}, then also $r'=(r_{1-\alpha})$ is an extreme pmf.
\end{proposition}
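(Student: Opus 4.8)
The plan is to realize the correspondence $r\mapsto r'$ as the restriction to $\mathcal{C}(H_d)$ of a single linear symmetry of the ambient space, and then to invoke the fact that a linear automorphism of a polyhedral cone permutes its extreme rays. Concretely, I would introduce the linear map $\sigma\colon\mathbb{R}^{2^d}\to\mathbb{R}^{2^d}$ defined by $(\sigma p)_\alpha=p_{1-\alpha}$ for every $\alpha\in\mathcal{D}$. Since $\alpha\mapsto 1-\alpha$ is a bijection of $\mathcal{D}$, the map $\sigma$ is simply a permutation of the coordinates of $\mathbb{R}^{2^d}$: hence it is a linear isomorphism, an involution ($\sigma^2=\mathrm{id}$), it maps the nonnegative orthant onto itself, and it preserves the sum $\sum_\alpha p_\alpha$.

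The first (and only genuinely delicate) step is to check that $\sigma\bigl(\mathcal{C}(H_d)\bigr)=\mathcal{C}(H_d)$. For the rows of $H_d$ imposing uniform univariate margins this is immediate, because $\sigma$ merely negates the corresponding linear form while the constraint is homogeneous. For the rows imposing the fixed bivariate moments $\overline{\mu}_{ij}$ — which are \emph{not} individually invariant under the coordinate permutation — the invariance under $\sigma$ is exactly the content of Proposition~\ref{prop:sym}, via the identity $E[(1-X_i)(1-X_j)]=E[X_iX_j]$ that holds once the margins are uniform. Thus, applying Proposition~\ref{prop:sym} to the normalization of an arbitrary nonzero $p\in\mathcal{C}(H_d)$ and rescaling yields $H_d\,\sigma p=0$; combined with $\sigma^2=\mathrm{id}$, this makes $\sigma$ a linear automorphism of $\mathcal{C}(H_d)$.

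It then remains to push forward an extreme ray. Using the geometric characterization of extreme rays (see \cite[p.~65]{barvinok:02}), a ray $R=\mathbb{R}_{\geq 0}\,r$ is an extreme ray of $\mathcal{C}(H_d)$ if and only if it is a one-dimensional face, i.e.\ there is a linear functional $\ell\geq 0$ on $\mathcal{C}(H_d)$ with $\{\,q\in\mathcal{C}(H_d):\ell(q)=0\,\}=R$. If $r$ is an extreme pmf and $\ell$ exposes its ray $R$, then $\ell\circ\sigma$ is again nonnegative on $\mathcal{C}(H_d)=\sigma\bigl(\mathcal{C}(H_d)\bigr)$ and, because $\sigma$ is an involution, vanishes on $\mathcal{C}(H_d)$ exactly along $\sigma(R)=\mathbb{R}_{\geq 0}\,\sigma r$; hence $\sigma(R)$ is a one-dimensional face, i.e.\ an extreme ray. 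Finally $\sum_\alpha(\sigma r)_\alpha=\sum_\alpha r_{1-\alpha}=\sum_\alpha r_\alpha=1$, so $r'=\sigma r=(r_{1-\alpha})$ is already a probability vector and therefore an extreme pmf. (Equivalently, one could bypass faces: if $\sigma r=u+v$ with $u,v\in\mathcal{C}(H_d)$, applying $\sigma$ gives $r=\sigma u+\sigma v$ with $\sigma u,\sigma v\in\mathcal{C}(H_d)$, so extremality of $r$ forces $\sigma u,\sigma v\in R$, i.e.\ $u,v\in\sigma(R)$.)

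In summary, the main obstacle is the invariance $\sigma\bigl(\mathcal{C}(H_d)\bigr)=\mathcal{C}(H_d)$, and it is dispatched by Proposition~\ref{prop:sym}; everything else is the formal observation that an involutive linear automorphism of a polyhedral cone preserves its faces — in particular its extreme rays — and that here it also preserves the normalization defining a pmf.
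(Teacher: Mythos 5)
Your proposal is correct and follows essentially the same route as the paper: the decisive step in both is that the coordinate permutation $p \mapsto (p_{1-\alpha})$ maps $\mathcal{C}(H_d)$ onto itself by Proposition~\ref{prop:sym}, after which extremality of $r$ transfers across the symmetry. The paper runs the final step as a proof by contradiction using the sum-decomposition characterization of extreme rays, which is precisely the direct argument in your closing parenthesis; your main exposed-functional argument is only a different dressing of the same fact that an involutive linear automorphism of a polyhedral cone permutes its extreme rays.
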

\begin{proof}
For the sake of simplicity, we work on the unnormalized polyhedral cone defined by Eq.~\eqref{eq:polycone}. We know by Prop.~\ref{prop:sym} that $r'$ is a feasible distribution. We need to prove that $r'$ is an extreme pmf. We know that $r$ is an extreme pmf in $\mathcal{F}(H_d)$ if and only if writing 
\[
r = p_1 + p_2 \qquad \qquad p_1,p_2 \in \mathcal{C}(H_d)
\]
then $p_1=\theta_1 r$ and $p_2=\theta_2 r$, $\theta_1,\theta_2\ge 0$. 

Now, consider $r'$ and proceed by contradiction. Suppose that $r'$ is not an extreme pmf. Thus, we can write
\begin{equation}\label{eq:contr}
r' = p_1 + p_2 \qquad \qquad p_1,p_2 \in \mathcal{C}(H_d)
\end{equation}
with $p_1$ and $p_2$ not simultaneously multiple of $r'$. Symmetrizing again the previous equation and noticing that $(r')' = r$, we have
\[
r = (r')' = p_1' + p_2'
\]
Since $p_1'$ and $p_2'$ belong to $\mathcal{C}(H_d)$ by Prop.~\ref{prop:sym}, and $r$ is an extreme pmf, we get that both $p_1'$ and $p_2'$ must be nonnegative multiples of $r$:
\[
p_1' = \theta_1 r, \qquad p_2' = \theta_2 r
\]
for some nonnegative $\theta_1,\theta_2$. Applying again the symmetrization with respect to the indices we obtain:
\[
p_1 = \theta_1 r', \qquad p_2 = \theta_2 r'
\]
and thus both $p_1$ and $p_2$ are nonnegative multiples of $r'$, in contradiction with Eq.~\eqref{eq:contr}.
\end{proof}

The last result concerns the symmetries in the log-linear representation of the pmfs in the feasible set. That symmetry lies on the special features of the zero-mean parametrization of the model. We recall that, writing a non-negative probability distribution in the saturated log-linear representation, we have
\[
\log p_{i_1 \dots i_d}
=
\lambda
+
\sum_{\emptyset \neq S \subseteq \{1,\dots,d\}}
\lambda^S_{i_S} \, ,
\]
where the sum runs over all nonempty subsets $S$ of $\{1,\dots,d\}$ and each $i_S$ is the subset of the indices identified by $S$. The zero-mean parametrization assumes that the sum of the parameters $\lambda^S_{i_S}$ is zero over all one-dimensional indices in $S$. For instance, in the two-way case we write:
\[
\log p_{\alpha_1 \alpha_2} = \lambda + \lambda_{\alpha_1}^{1}+ \lambda_{\alpha_2}^{2} + \lambda_{\alpha_1 \alpha_2}^{12} 
\]
with $\lambda^i_0+\lambda^i_1=0$, $i=1,2$ and $\lambda^{12}_{00}+\lambda^{12}_{01}=\lambda^{12}_{10}+\lambda^{12}_{11}=\lambda^{12}_{00}+\lambda^{12}_{00}=\lambda^{12}_{01}+\lambda^{12}_{11}=0$.

\begin{proposition} \label{prop:symloglin}
Using the zero-mean parametrization of the saturated log-linear model, two nonnegative symmetric pmfs $p$ and $p'$ have the same log-linear parameters in absolute value, with the same sign for even orders of interaction and with the opposite sign for odd orders of interaction.
\end{proposition}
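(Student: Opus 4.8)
The plan is to work directly with the defining sum-to-zero constraints of the zero-mean parametrization together with the uniqueness of that parametrization, rather than introducing any auxiliary basis. Throughout I will take $p$ (and hence $p'$, which has the same support since $p_\alpha>0 \iff p'_{1-\alpha}>0$ is automatic) to be strictly positive, so that $\log p$ is finite and the saturated log-linear expansion is well defined; this positivity is implicit in the statement. Write
\[
\log p_\alpha = \lambda + \sum_{\emptyset\neq S\subseteq\{1,\dots,d\}} \lambda^S_{\alpha_S},
\]
with the zero-mean constraints: for every nonempty $S$ and every $j\in S$, summing $\lambda^S_{\alpha_S}$ over $\alpha_j\in\{0,1\}$ with the other coordinates of $\alpha_S$ held fixed yields $0$.

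First I would transport the expansion to $p'$. Since $p'_\alpha = p_{1-\alpha}$,
\[
\log p'_\alpha = \lambda + \sum_{\emptyset\neq S} \lambda^S_{(1-\alpha)_S}.
\]
The claim is that this is \emph{already} the zero-mean parametrization of $p'$. Indeed, fix a nonempty $S$ and $j\in S$; as $\alpha_j$ ranges over $\{0,1\}$, so does $(1-\alpha)_j$, so $\sum_{\alpha_j\in\{0,1\}} \lambda^S_{(1-\alpha)_S} = \sum_{\beta_j\in\{0,1\}} \lambda^S_{\beta_S} = 0$ by the zero-mean property of the $\lambda^S$. Thus each function $\alpha\mapsto \lambda^S_{(1-\alpha)_S}$ satisfies the same constraints, and by uniqueness of the zero-mean parametrization of the strictly positive pmf $p'$, its intercept equals $\lambda$ and its order-$|S|$ parameter is exactly $\alpha\mapsto\lambda^S_{(1-\alpha)_S}$.

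The remaining step — and the only place where the binary structure is essential — is the observation that flipping all arguments of $\lambda^S$ multiplies it by $(-1)^{|S|}$. For a single $j\in S$ the zero-mean constraint gives $\lambda^S_{\ldots 0_j\ldots}=-\lambda^S_{\ldots 1_j\ldots}$, i.e.\ toggling the $j$-th coordinate of $\alpha_S$ changes the sign; iterating over all $|S|$ coordinates of $S$ yields $\lambda^S_{(1-\alpha)_S} = (-1)^{|S|}\lambda^S_{\alpha_S}$. Combined with the previous paragraph, the log-linear parameters of $p'$ are $\lambda$ (unchanged intercept) and $(-1)^{|S|}\lambda^S_{\alpha_S}$, so they agree with those of $p$ in absolute value, with the same sign when $|S|$ is even and the opposite sign when $|S|$ is odd, which is the assertion. (Equivalently, one may pass to the $\pm1$ encoding $z_j=2\alpha_j-1$ and note that the zero-mean terms are precisely the Walsh characters $\beta_S\prod_{j\in S}z_j$, on which $\alpha\mapsto 1-\alpha$ acts by $z_j\mapsto -z_j$; this gives the same conclusion with a one-line computation.)

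I do not expect a genuine obstacle here: the argument is short once this viewpoint is adopted. The two points requiring care are (i) the implicit strict-positivity (full-support) assumption, needed for the zero-mean parametrization to exist and be unique — without it the statement must be read on the interior of the feasible polytope, or with a convention for the log-linear parameters restricted to the common support; and (ii) the bookkeeping distinguishing "summing over a coordinate of $\alpha$" from "summing over a coordinate of $1-\alpha$", which is exactly what certifies that the transported expansion lands back in the zero-mean gauge.
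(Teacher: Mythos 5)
Your proposal is correct and follows essentially the same route as the paper's proof: flip all indices via $p'_\alpha=p_{1-\alpha}$ and use the binary zero-sum constraints to get $\lambda^S_{(1-\alpha)_S}=(-1)^{|S|}\lambda^S_{\alpha_S}$. You are in fact more careful than the paper, which asserts the conclusion as ``immediate''; your explicit checks that the transported expansion still satisfies the zero-mean constraints (so uniqueness identifies it as the parametrization of $p'$) and that strict positivity is needed are exactly the details the paper leaves implicit (and acknowledges in the remark following the proposition).
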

\begin{proof}
It is enough to consider the saturated log-linear form of $p$:
\[
\log p_{i_1 \dots i_d}
=
\lambda
+
\sum_{\emptyset \neq S \subseteq \{1,\dots,d\}}
\lambda^S_{i_S} \, ,
\]
Noticing that mapping $p$ into $p'$ changes all the indices in $\alpha$ it is immediate to conclude that 
\[
\log p_{i_1 \dots i_d}
=
\lambda
+
\sum_{\emptyset \neq S \subseteq \{1,\dots,d\}}
(-1)^{|S|}\lambda^S_{i_S} \, ,
\]
which proves the proposition.
\end{proof}

We remark that the above proposition needs a strictly positive distribution. However, we can apply it to the extreme pmfs by adding a small positive constant to avoid zero probabilities. We will see this issue in the example below.
 
\begin{example}  \label{ex:running}
Consider the $3$-way contingency table $T_0$ whose pmf $p_0$ is reported in Table \ref{tab:marg01} together with its marginal odds ratios $\omega_M^{12}$, $\omega_M^{13}$, and $\omega_M^{23}$. 

\begin{table}[b]
	\centering
    \caption{Table $T_0$ for Example \ref{ex:running} reporting $p_0$ (exact value) and its marginal odds ratios (rounded to 2 decimal digits). Data from \cite{fontana2026ijar}.}
	\label{tab:marg01}
		\begin{tabular}{lrrrrrrrrrrrr}   \toprule
pmf & $000$ & $001$ & $010$ & $011$ & $100$ & $101$ & $110$ & $111$ & $ \, $ & $\omega_M^{12}$ &
$\omega_M^{13}$ & $\omega_M^{23}$ \\
\midrule
$p_0$ & $0.1$ & $0.05$ & $0.3$ & $0.2$ & $0.1$ & $0.05$ & $0.15$ & $0.05$ & $ \, $ & $0.40$ & $0.64$ & $1.11$ \\  
\bottomrule
\end{tabular}
\end{table}

Our goal is to determine the convex polytope of the pmfs with uniform margins and the same marginal odds ratios. As detailed in the previous sections, for two-way tables there is a one-to-one relation between the marginal odds ratio $\omega_M^{12}$ and the value of the pmf with uniform margins corresponding to the $(1,1)$ cell, that is $p_{11}$:
\begin{equation} \label{eq:ormp11}
p_{11}=\frac{\sqrt{\omega_M^{12}}}{2\left(\sqrt{\omega_M^{12}}+1\right)}.    
\end{equation}
For three-way tables, given $\omega_M^{12}$, $\omega_M^{13}$, and $\omega_M^{23}$, the equation Eq.~\eqref{eq:ormp11} can be used to determine the second-order moments $\mu_{12}=p_{110}+p_{111}$, $\mu_{13}=p_{101}+p_{111}$, and $\mu_{23}=p_{011}+p_{111}$, respectively. In our example, we obtain:
\[
\mu_{12}=0.194, \qquad \mu_{13}=0.222, \qquad \mu_{23}=0.257.
\]
Building the corresponding $H_3$ matrix and using \texttt{4ti2}, \cite{4ti2}, we obtain the extreme rays that, after normalization, provide the extreme pmfs of the polytope that contains all the pmfs with uniform margins and the same marginal odds ratio of the original table $p_0$. The extreme pmfs are reported in Table \ref{tab:erex}. In the three-way binary case, the feasible set is a segment in the simplex and the extreme pmfs are just its endpoints. 

\begin{table}[h]
	\centering \caption{Extreme pmfs $r_1$ and $r_2$ of the convex polytope of the pmfs with uniform margins and marginal odds ratios of $p_0$ for Example \ref{ex:running}, all values rounded to 3 decimal places.}
	\label{tab:erex}
		\begin{tabular}{lrrrrrrrr} \toprule
pmf & $000$ & $001$ & $010$ & $011$ & $100$ & $101$ & $110$ & $111$ \\
\midrule
$r_1$ & $0$ & $0.194$ & $0.222$ & $0.084$ & $0.257$ & $0.05$ & $0.021$ & $0.173$ \\ 
$r_2$ & $0.173$ & $0.021$ & $0.05$ & $0.257$ & $0.084$ & $0.222$ & $0.194$ & $0$ \\ 
\bottomrule
\end{tabular}
\end{table}

Let us define a generic probability in the feasible set as the mixture $p = \alpha r_1 + (1-\alpha) r_2$, with $\alpha \in [0,1]$. It is straightforward to verify that the three-dimensional odds ratio
\[
{\omega}^{123} = \frac {p_{000}p_{011}p_{101}p_{110}} {p_{111}p_{100}p_{010}p_{001}} 
\]
tends to $0$ as $\alpha$ approaches $1$ and diverges to $+\infty$ as $\alpha$ approaches $0$, thereby spanning all possible three-way dependence structures.

Additionally, the two extreme pmfs of Table~\ref{tab:erex} are symmetric in the sense of Prop.~\ref{prop:symrays}.
To clarify the symmetry in the log-linear representation stated in Prop.~\ref{prop:symloglin}, we have considered the log-linear representation 
\[
\log r_{\alpha_1 \alpha_2 \alpha_3} = \lambda + \lambda_{\alpha_1}^{1}+ \lambda_{\alpha_2}^{2} + \lambda_{\alpha_3}^{3} +  \lambda_{\alpha_1 \alpha_2}^{12} +  \lambda_{\alpha_1 \alpha_3}^{13} +  \lambda_{\alpha_2 \alpha_3}^{23} +
 \lambda_{\alpha_1 \alpha_2 \alpha_3}^{123} \, ,
\]
where the values of $r_{i,\alpha}$ have been set to $r_{i,\alpha}+10^{-8}$ to avoid numerical problems in the computation of logarithms. In addition to the zero-mean parametrization, we also considered the corner parametrization, to show that the symmetry is a peculiarity of zero-mean parametrization. The corner parametrization is a well established method to obtain a full rank model matrix. It assumes that all parameters are zero except for $i_S=1$ (i.e., all the indices in $S$ are equal to $1$). In this three-way example this means that only $\lambda$, $\lambda^{1}_1$, $\lambda^{2}_1$, $\lambda^{3}_{1}$, $\lambda_{11}^{12}$, $\lambda_{11}^{13}$, $\lambda_{11}^{23}$, $\lambda_{11}^{13}$, $\lambda_{11}^{23}$ and $\lambda_{111}^{123}$ are non-zero. 

The values of the log-linear parameters associated to $r_1$ and $r_2$ for both parametrizations are reported in Table \ref{tab:betaes1}.

%

\begin{table}[h]
	\centering
    	\caption{Parameters of the log-linear representations of the extreme pmfs $r_1$ and $r_2$ from Table \ref{tab:erex}, all values rounded to 2 decimal places.}
	\label{tab:betaes1}
		\begin{tabular}{llrrrrrrrr} \toprule
parametrization & pmf & $\lambda$ & $\lambda_1^1$ & $\lambda_1^2$ & $\lambda_1^3$ & $\lambda_{11}^{11}$ & $\lambda_{11}^{13}$ & $\lambda_{11}^{23}$ & $\lambda_{111}^{111}$ \\
\midrule
corner & $r_1$ & $-18.42$  & $17.06$  & $16.92$ & $16.78$ & $-19.41$ & $-18.42$ & $-17.75$  & $21.49$ \\ 
corner & $r_2$ & $-1.76$ & $-0.72$ & $-1.24$ & $-2.10$ & $2.08$ & $3.07$ & $3.74$ & $-21.49$ \\  \midrule
zero-mean & $r_1$ & $-4.25$ & $1.76$ & $1.85$ & $2.03$ & $-2.17$ & $-1.92$ & $-1.75$  & $2.69$ \\ 
zero-mean & $r_2$ & $-4.25$ & $-1.76$ & $-1.85$ & $-2.03$ & $-2.17$ & $-1.92$ & $-1.75$  & $-2.69$   \\ \bottomrule
\end{tabular}
\end{table}

If we fix the observed margins instead of the uniform margins, the symmetry disappears. Now our goal is to determine the convex polytope of the pmfs with the same margins and marginal odds ratios of $p_0$. Building the corresponding $H_3$ matrix and using again \texttt{4ti2}, \cite{4ti2}, we obtain the extreme rays that, after normalization, provide the extreme pmfs of the relevant feasible set. The extreme pmfs are reported in Table \ref{tab:erexobs}.  

\begin{table}[h]
	\centering \caption{Extreme pmfs $r_1$ and $r_2$ of the convex polytope of the pmfs with the same margins and marginal odds ratios of $p_0$ for Example \ref{ex:running}, all values rounded to 3 decimal places.}
	\label{tab:erexobs}
		\begin{tabular}{crrrrrrrr} \toprule
pmf & $000$ & $001$ & $010$ & $011$ & $100$ & $101$ & $110$ & $111$ \\
\midrule
$r_1$ & 0.050 &	0.100 &	0.350 &	0.150 &	0.150 &	0 &	0.100 &	0.100  \\ 
$r_2$ & 0.150 &	0 &	0.250 &	0.250 &	0.050 &	0.100 &	0.200 &	0 \\
\bottomrule
\end{tabular}
\end{table}

\end{example}
Even when exploiting the symmetry properties established in the propositions of this section, selecting a specific distribution within the feasible set remains challenging. When the dimension $d$ is moderate, i.e., up to $6$ or $7$, the proposed approach is computationally feasible: the extremal pmfs can be listed, and uniform sampling over the corresponding polytope can be performed using triangulation-based methods, such as those available in the R package \texttt{uniformly} \cite{uniformly}. As the dimension $d$ increases and the number of extreme pmfs becomes too large, other techniques, such as those implemented in the R package \texttt{volesti} \cite{volesti}, can be used; this tool does not rely on triangulation but instead employs random-walk-based methods to provide uniform sampling from a given polytope. Examples of applications in both low and high dimensions can be found in \cite{fontana2023exchangeable}. In high dimensions, another possible approach consists of using MCMC-type methods. For instance, Markov bases from Algebraic Statistics have been successfully employed in related problems that require navigating subsets of the probability simplex \cite{PRR2021}. A complete study of these techniques, however, is beyond the scope of the present paper.

\section{Illustrative examples and applications}\label{sec:examples}

In this section, we illustrate how the proposed framework can be applied to real and synthetic contingency tables. 
We begin by analyzing a four-way binary table, showing how the feasible set of distributions with fixed pairwise dependence and uniform margins can be characterized through its extreme rays. We then explore how the presence of observed zeros affects the geometry of the feasible region, potentially restricting or altering the dependence structure. 
Finally, we present a real-data application in the context of rater agreement, demonstrating how extreme solutions can help interpret higher-order dependence and guide model selection.

\subsection{Four-way binary table: geometric structure and extreme rays}\label{sec:example4way}

Having previously illustrated the case $d=3$, we now consider a four-way binary table to demonstrate how the proposed approach extends to higher dimensions. This example, adapted from \cite{goodman:71}, allows us to explicitly construct the matrix $H_4$, determine the corresponding polyhedral cone, and compute its extreme rays. 
Unlike the three-way case discussed in Section~\ref{sec:framework}, the four dimensional case reveals a more complex geometry, with a substantially larger number of extreme rays and more complex patterns of higher-order dependence. 

The original table presented in \cite{goodman:71} is of dimension $3 \times 2^3$. We adapted it to our framework by collapsing two levels of the first variable. The resulting table, reported in Table \ref{tab:water}, summarizes the cross-classification of a sample of $1,008$ consumers according to four variables: the softness of the laundry water used (Soft, Medium$+$Hard); the previous use of a detergent M; the temperature (T) of the laundry water used (High, Low); the preference for Detergent X over M. 

\begin{table}[t]
\caption{Four-way contingency table for the example in Sect.~\ref{sec:example4way}. Data adapted from \cite{goodman:71}.}\label{tab:water}%
\begin{tabular}{llcccc} \toprule
 & & \multicolumn{2}{c}{Previous user of M} & \multicolumn{2}{c}{Previous non-user of M} \\
Water softness & Brand preference & High T & Low T & High T & Low T \\
\midrule

{Soft}   & X & 19 & 57 & 29 & 63 \\
  & M & 29 & 49 & 27 & 53 \\
\midrule
{Medium$+$Hard}   & X & 47 & 84 & 75 & 134 \\
  & M & 90 & 107 & 53 & 92 \\
\bottomrule
\end{tabular}
\end{table}

The pmf $p_0$ corresponding to the four-way contingency table of Table \ref{tab:water} is computed by normalization. 
	

The marginal odds ratios corresponding to $p_0$ are reported in Table \ref{tab:waterorm}. Analogously to what we did for three-way tables, given $\omega_M^{ij}$, the equation Eq.~\eqref{eq:ormp11} can be used to determine the corresponding second-order moments $\mu_{ij}$, $i,j \in\{1,2,3,4\}, i<j$, that will provide the conditions to get pmfs with uniform margins and the same marginal odd-ratios of $p_0$. These second-order moments are reported in Table \ref{tab:waterorm}. Building the corresponding $H_4$ matrix and using \texttt{4ti2}, \cite{4ti2}, we obtain the extreme rays that, after normalization, provide the extreme pmfs of the polytope that contains all the pmfs with uniform margins and the same marginal odds ratio of the original table $p_0$. We find $n_d=96$ extreme pmfs.

\begin{table}[h]
	\centering
    \caption{Marginal odds ratios of $p_0$ and second-order moments of the pmfs with uniform margins and the same marginal odds ratios of $p_0$ for Table \ref{tab:water}, all values rounded to 3 decimal places.}
	\label{tab:waterorm}
		\begin{tabular}{ccrr}  \toprule
        $i$ & $j$ & $\omega_M^{ij}$ & $\mu_{ij}$ \\ 
        \midrule
1 &	2 &	1.070 &	0.254 \\
2 &	3 &	0.563 &	0.214\\
1 &	3 &	0.966 &	0.248\\
3 &	4 &	1.158 &	0.259\\
2 &	4 &	0.761 &	0.233\\
1 &	4 &	0.737 &	0.231\\
\bottomrule
\end{tabular}
	
\end{table}

By forming a convex combination of these extreme pmfs, we obtain a rich variety of admissible joint distributions that all share the same pairwise dependence structure, thereby allowing for a more flexible selection of the most suitable model for further analysis.

\subsection{Empirical application on a rater agreement problem}\label{sec:case_study}

We now illustrate how the proposed framework can be used in practice using real agreement data involving three raters. In such settings, the interest typically lies not only in pairwise agreement, but also in detecting higher-order agreement patterns that cannot be captured by pairwise measures alone. 
The data, presented in Table \ref{tab:agree}, are treated as a binary table in \cite{vanbelle:23}, obtained by collapsing a three-category table originally reported in \cite{martinandres:20}. 
In this study on persuasive communication, three raters classified the cognitive responses of $164$ patients into two categories: “positive or neutral” (level $2$) and “negative” (level $1$).
  
\begin{table}[h!]
\caption{Table from rater agreement problem. Data from \cite{vanbelle:23}.}\label{tab:agree}%
\begin{tabular}{@{}cccc@{}}  \toprule
 &  & \multicolumn{2}{c}{R3} \\
R1 & R2 & 1 & 2 \\
\midrule
1 & 1 & $113$ & $5$ \\
   & 2 & $5$ & $7$ \\
\midrule
2 & 1 & $4$ & $3$\\
   & 2 & $3$ & $24$ \\
\botrule
\end{tabular}
\end{table}

As already discussed, standard procedures, such as MIPFP or log-linear models, return a single table and often implicitly force higher-order interactions to vanish, which may lead to misleading interpretations when genuine higher-order dependence is present.
For example, the three-way interaction in  Table~\ref{tab:agree} can be described by the three-dimensional odds ratio, whose observed value is
\[
\widehat{\omega}^{123} = \frac {n_{111}n_{122}n_{212}n_{221}} {n_{222}n_{211}n_{121}n_{112}} = 2.96625.
\]
This value suggests a strong pattern of dependence beyond the two-way correlations, thus making it not reasonable to use, for example, the MIPFP-based algorithm (which takes the high-order odds ratios equal to $1$) for simulation purposes in frameworks of this kind. 
By instead characterizing all admissible pmfs with uniform margins and fixed pairwise dependence, our framework allows us to investigate how higher-order interactions can vary while keeping pairwise agreement unchanged. 
This provides a richer and more transparent view of agreement structure and helps determine whether observed dependence can be explained by pairwise associations alone, or whether a three-way agreement effect is present.
The pmf $p_0$ corresponding to the three-way contingency table of Table \ref{tab:agree} is reported in Table \ref{tab:raterpmf}.

\begin{table}[t]
	\centering 
    	\caption{$p_0$ and its marginal odds ratios for Table \ref{tab:agree}, all values rounded to 3 decimal places.}
	\label{tab:raterpmf}
		\begin{tabular}{lrrrrrrrrrrr} \toprule
pmf & $000$ & $001$ & $010$ & $011$ & $100$ & $101$ & $110$ & $111$ & $\omega_M^{12}$ &
$\omega_M^{13}$ & $\omega_M^{23}$\\
\midrule
$p_0$ & $0.689$ & $0.03$ & $0.03$ & $0.043$ & $0.024$ & $0.018$ & $0.018$ & $0.146$ & $37.929$ & $37.929$ & $56.672$ \\  \bottomrule
\end{tabular}

\end{table}
\begin{table}[t!]
	\centering
    	\caption{Extreme pmfs $r_1$ and $r_2$ for $p_0$ in Table \ref{tab:raterpmf}, all values rounded to 3 decimal places.}
	\label{tab:erexrealcase} 
		\begin{tabular}{lrrrrrrrr} \toprule
pmf & $000$ & $001$ & $010$ & $011$ & $100$ & $101$ & $110$ & $111$ \\
\midrule
$r_1$ & $0.372$ & $0.059$ & $0.059$ & $0.011$ & $0.07$ & $0$ & $0$ & $0.43$ \\ 
$r_2$ & $0.43$ & $0$ & $0$ & $0.07$ & $0.011$ & $0.059$ & $0.059$ & $0.372$  \\  
\bottomrule
\end{tabular}

\end{table}

Following the steps already discussed in the previous sections, we can derive the extreme pmfs, which are reported in Table \ref{tab:erexrealcase}.
The values of the parameters associated with $r_1$ and $r_2$ for both parametrizations
are reported in Table \ref{tab:betarealcase}. The values of $r_{i,\alpha}$ have been set to $r_{i,\alpha}+10^{-8}$ to avoid numerical problems in computing logarithms.

%
%
%

\begin{table}[h]
	\centering 
    	\caption{Parameters of the extreme pmfs $r_1$ and $r_2$ in Table \ref{tab:erexrealcase}, all values rounded to 2 decimal places.}
	\label{tab:betarealcase}
		\begin{tabular}{llrrrrrrrr} \toprule
parametrization & pmf & $\lambda$ & $\lambda_1^1$ & $\lambda_1^2$ & $\lambda_1^3$ & $\lambda_{11}^{11}$ & $\lambda_{11}^{13}$ & $\lambda_{11}^{23}$ & $\lambda_{111}^{111}$ \\
\midrule
corner & $r_1$ & $-0.99$ & $-1.67$  & $-1.85$ & $-1.85$ & $-13.91$ & $-13.91$ & $0.19$ &   $33.14$ \\ 
corner & $r_2$ & $-0.84$ & $-3.65$ & $-17.58$ & $-17.58$  & $19.23$  & $19.23$ & $33.34$  & $-33.14$ \\ \midrule
zero-mean & $r_1$ & $-6.44$ & $-3.65$ & $-0.21$ & $-0.21$ & $0.66$ & $0.66$ & $4.19$  & $4.14$ \\ 
zero-mean & $r_2$ & $-6.44$ & $3.65$ & $0.21$ & $0.21$ & $0.66$ & $0.66$ & $4.19$  & $-4.14$  \\  \bottomrule

\end{tabular}

\end{table}

\section{Discussion and future work}\label{sec:discussion}
This paper introduces a geometric framework for characterizing the full set of joint distributions of $d$ binary variables with uniform margins and fixed pair-wise dependence, expressed via correlations or marginal odds ratios. 
Rather than selecting a single representative distributions, as commonly done in existing approaches such as MIPFP or log-linear model based sampling, we describe the entire admissible space of solutions. 
We show that this feasible set forms a polyhedral cone, study its extreme rays and prove some symmetry properties. By identifying these rays, we gain insight into how higher-order interactions may vary while keeping pair-wise information fixed.
This perspective allows for a more conscious and flexible choice of the dependence structure when generating tables.

We now discuss some limitations of the approach presented in this work. Although assuming uniform margins offers both mathematical and interpretative advantages, this choice may also introduce restrictions, particularly when structural or sampling zeros are present in the data. Zero entries can substantially reduce the feasible set of distributions, and in some cases make it impossible to construct a table with uniform margins and prescribed marginal odds ratios while preserving the original support. In such cases, understanding how zero-patterns interact with uniform margins becomes essential. This aspect has been studied in \cite{fontana2025IES, fontana2026ijar}, where conditions for compatibility are provided. These results are fundamental to our framework, as they can be adjusted to allow us to identify when the feasible set is nonempty and to interpret the implications of zero-patterns on the structure of admissible distributions.

From a computational perspective, several challenges remain open. As discussed in Section \ref{sec:framework}, although there are established methods for sampling points from convex polytopes, their suitability and efficiency in the context of our specific geometric structure have not yet been fully explored.
Assessing and adapting these sampling strategies to our framework is a promising direction for future work and will be addressed in a follow-up study.
Besides, the framework developed here focuses on binary tables. A natural direction for future work is to extend the framework to multilevel categorical variables, which frequently arise in applications. Such a multilevel settings introduce larger and more complex feasible regions, resulting in a more challenging and interesting geometric characterization. We believe this will further broaden the applicability of our approach.

\section*{Declarations}

The authors declare no competing interests.



\end{document}